\begin{document}

\title{An Improved Scheme in the Two Query Adaptive Bitprobe Model}

% If the paper title is too long for the running head, you can set
% an abbreviated paper title here
\titlerunning{Two Query Four Element Bitprobe Scheme}

\author{Mirza Galib Anwarul Husain Baig \and
Deepanjan Kesh \and Chirag Sodani}

% First names are abbreviated in the running head.
% If there are more than two authors, 'et al.' is used.
%\authorrunning{F. Author et al.}

\institute{Indian Institute of Technology Guwahati, Guwahati, Assam 781039, India \email{\{mirza.baig,deepkesh,chirag.sodani\}@iitg.ac.in}}

\maketitle

\begin{abstract}

In this paper, we look into the adaptive bitprobe model that stores subsets of size at most four from a universe of size $m$, and answers membership queries using two bitprobes. We propose a scheme that stores arbitrary subsets of size four using $\mathcal{O}(m^{5/6})$ amount of space. This improves upon the non-explicit scheme proposed by Garg and Radhakrishnan~\cite{DBLP:conf/soda/GargR15} which uses $\mathcal{O}(m^{16/17})$ amount of space, and the explicit scheme proposed by Garg~\cite{garg15} which uses $\mathcal{O}(m^{14/15})$ amount of space.

\keywords{Data structure \and Set membership problem \and Bitprobe model \and Adaptive Scheme.}
\end{abstract}

\section{Introduction}

Consider the following static membership problem -- given a universe $\mathcal{U}$ containing $m$ elements, we want to store an arbitrary subset $\mathcal{S}$ of $\mathcal{U}$ whose size is at most $n$, such that we can answer membership queries of the form ``Is $x$ in $\mathcal{S}$?'' Solutions to problems of this nature are called {\em schemes} in the literature. The resources that are considered to evaluate the schemes are the size of the data structure devised to store the subset $\mathcal{S}$, and the number of bits read of the data structure to answer the membership queries, called {\em bitprobes}. The notations for the space used and the number of bitprobes required are $s$ and $t$, respectively. This model of the static membership problem is called the {\em bitprobe model}.

Schemes in the bitprobe model are classified as {\em adaptive} and {\em non-adaptive}. If the location where the current bitprobe is going to be depends on the answers obtained from the previous bitprobes, then such schemes are called {\em adaptive schemes}. On the other hand, if the location of the current bitprobe is independent of the answers obtained in the previous bitprobes, then such schemes are called {\em non-adaptive schemes}. Radhakrishnan {\em et al.}~\cite{DBLP:conf/esa/RadhakrishnanRR01} introduced the notation $(n,m,s,t)_A$ and $(n,m,s,t)_N$ to denote the adaptive and non-adaptive schemes, respectively. Sometimes the space requirement of the two classes of schemes will also be denoted as $s_A(n,m,t)$ and $s_N(n,m,t)$, respectively.

\subsection{The Bitprobe Model}

		\begin{figure}[t]
			\centering
			\begin{tikzpicture}
				\node[ellipse, fill=gray!30] (A) at (5,4) {$\mathcal{A}$};
				\node[ellipse, fill=gray!30] (B) at (2,2) {$\mathcal{B}$};
				\node[ellipse, fill=gray!30] (C) at (8,2) {$\mathcal{C}$};
				\node[ellipse, fill=gray!30] (D) at (0,0) {{\tt No}};
				\node[ellipse, fill=gray!30] (E) at (4,0) {{\tt Yes}};
				\node[ellipse, fill=gray!30] (F) at (6,0) {{\tt No}};
				\node[ellipse, fill=gray!30] (G) at (10,0) {{\tt Yes}};
	
				\draw[line width=1pt, ->] (A) -- (B) node[midway, above] {$0$};
				\draw[line width=1pt, ->] (A) -- (C) node[midway, above] {$1$};
	
				\draw[line width=1pt, ->] (B) -- (D) node[midway, above] {$0$};
				\draw[line width=1pt, ->] (B) -- (E) node[midway, above] {$1$};
		
				\draw[line width=1pt, ->] (C) -- (F) node[midway, above] {$0$};
				\draw[line width=1pt, ->] (C) -- (G) node[midway, above] {$1$};
			\end{tikzpicture}
			\caption{The decision tree of an element.}
			\label{fig:tree}
		\end{figure}
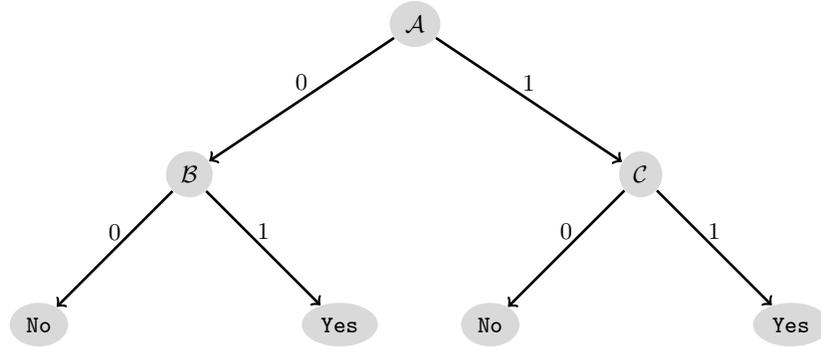

The scheme presented in this paper is an adaptive scheme that uses two bitprobes to answer membership queries. We now discuss in detail the bitprobe model in the context of two adaptive bitprobes.

The data structure in this model consists of three tables -- $\mathcal{A}, \mathcal{B}$, and $\mathcal{C}$ -- arranged as shown in Figure~\ref{fig:tree}. Any element $e$ in the universe $\mathcal{U}$ has a location in each of these three tables, which are denoted by $\mathcal{A}(e), \mathcal{B}(e)$, and $\mathcal{C}(e)$. By a little abuse of notation, we will use the same symbols to denote the bits stored in those locations.

Any bitprobe scheme has two components -- the {\em storage} scheme, and the {\em query} scheme.  Given a subset $\mathcal{S}$, the storage scheme sets the bits in the three tables such that the membership queries can be answered correctly. The flow of the query scheme is traditionally captured in a tree structure, called the {\em decision tree} of the scheme (Figure~\ref{fig:tree}). It works as follows. Given a query ``Is $x$ in $\mathcal{S}$?'', the first bitprobe is made in table $\mathcal{A}$ at location $\mathcal{A}(x)$. If the bit stored is 0, the second query is made in table $\mathcal{B}$, else it is made in table $\mathcal{C}$. If the answer received in the second query is 1, then we declare that the element $x$ is a member of $\mathcal{S}$, otherwise we declare that it is not.

\subsection{The Problem Statement}

As alluded to earlier, we look into adaptive schemes with two bitprobes ($t=2$). When the subset size is one ($n=1$), the problem is well understood -- the space required by the data structure is $\Omega(m^{1/2})$, and we have a scheme that matches this bound~\cite{DBLP:conf/soda/AlonF09,DBLP:conf/esa/LewensteinMNR14}.

For subsets of size two ($n=2$), Radhakrishnan {\em et al.}~\cite{DBLP:conf/esa/RadhakrishnanRR01} proposed a scheme that takes $\mathcal{O}(m^{2/3})$ amount of space, and further conjectured that it is the minimum amount of space required for any scheme. Though progress has been made to prove the conjecture~\cite{DBLP:conf/esa/RadhakrishnanRR01,DBLP:conf/esa/RadhakrishnanSS10}, it as yet remains unproven.

For subsets of size three ($n=3$), Baig and Kesh~\cite{DBLP:conf/walcom/BaigK18} have recently proposed a scheme that takes $\mathcal{O}(m^{2/3})$ amount of space. It has been subsequently proven by Kesh~\cite{kesh32} that $\Omega(m^{2/3})$ is the lower bound for this problem. So, the space complexity question for $n=3$ stands settled.

In this paper, we look into problem where the subset size is four ($n=4$), i.e. an adaptive bitprobe scheme that can store subsets of size atmost four, and answers membership queries using two bitprobes. Garg and Radhakrishnan~\cite{DBLP:conf/soda/GargR15} have proposed a generalised scheme that can store arbitrary subsets of size $n (< \log m)$, and uses $\mathcal{O}(m^{1 - \frac{1}{4n+1}})$ amount of space. For the particular case of $n=4$, the space requirement turns out to be $\mathcal{O}(m^{16/17})$. Garg~\cite{garg15} further improved the bounds to $\mathcal{O}(m^{1 - \frac{1}{4n-1}})$, which improved the scheme for $n=4$ to $\mathcal{O}(m^{14/15})$.

We propose a scheme for the problem whose space requirement is $\mathcal{O}(m^{5/6})$ (Theorem~\ref{thm:final}), thus improving upon the existing schemes in the literature. Our claim is the following:
\[
	s_A(4,m,2) = \mathcal{O}(m^{5/6}). \text{ (Theorem \ref{thm:final})}
\]

\section{Our Data structure}

In this section, we provide a detailed description of our data structure. To achieve a space bound of $o(m)$, more than one element must necessarily share the same location in each of the three tables. We discuss how we arrange the elements of the universe $\mathcal{U}$, and which all elements share the same location in any given table.

Along with the arrangement of elements, we will also talk about the size of our data structure. The next few sections prove the following theorem.

\begin{theorem}
	The size of our data structure is $\mathcal{O}(m^{5/6})$.
	\label{thm:size}
\end{theorem}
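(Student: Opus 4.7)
The theorem is a counting statement, so my plan is to fix the three parameters governing the layout of tables $\mathcal{A}$, $\mathcal{B}$, and $\mathcal{C}$ and then tally cells table by table.

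First I would partition the universe $\mathcal{U}$ into blocks of size $m^{1/6}$, producing $m^{5/6}$ blocks. All elements of a block are declared to share the same $\mathcal{A}$-address, so $|\mathcal{A}| = m^{5/6}$ immediately. The choice of the exponent $1/6$ is dictated by wanting the three tables to balance while still keeping each block small enough that the set $\mathcal{S}$ can only intersect a bounded number of them.

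Next I would describe the sharing pattern for tables $\mathcal{B}$ and $\mathcal{C}$. I would index each block by a pair consisting of a coarser coordinate (a ``super-block'' id) and a finer coordinate (its rank within the super-block). The address $\mathcal{B}(e)$ of an element is then a function of the coarser coordinate of its block together with the within-block index of $e$; the address $\mathcal{C}(e)$ is defined analogously with the roles of the two coordinates interchanged. Choosing the super-block size so that the number of super-blocks times the block size is $\mathcal{O}(m^{5/6})$ gives $|\mathcal{B}|, |\mathcal{C}| = \mathcal{O}(m^{5/6})$. Summing the three contributions yields the bound claimed by the theorem.

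The main obstacle is not the arithmetic---the three exponents balance by construction---but the coupling with the storage and query schemes developed in subsequent sections. The sharing structure cannot be picked purely to minimise space: elements that collide in $\mathcal{A}$ must still be separable by the pair of secondary probes, and this constrains how much sharing is permissible between elements of a single 4-subset. I would therefore flag the combinatorial guarantee---that the addresses of any four simultaneously present elements are not forced to coincide in a way that defeats the decision tree---as the nontrivial input that the later sections must provide, and after which the size claim follows from the three short tallies above.
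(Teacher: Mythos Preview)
Your count for table $\mathcal{A}$ matches the paper exactly: $m^{5/6}$ blocks of size $m^{1/6}$, one bit per block.

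For $\mathcal{B}$ and $\mathcal{C}$, however, you have described a \emph{different} (and simpler) data structure than the one the theorem is about. You propose the classic two-coordinate split: $\mathcal{B}$ keyed by (super-block id, index) and $\mathcal{C}$ by (rank within super-block, index). The paper's table $\mathcal{C}$ is indeed of this flavor---blocks are arranged on an $m^{2/6}\times m^{2/6}$ grid inside each of $m^{1/6}$ superblocks, and $\mathcal{C}$ stores one block per grid coordinate, giving $m^{4/6}\cdot m^{1/6}=m^{5/6}$. But table $\mathcal{B}$ is \emph{not} keyed by the super-block id. Instead, superblock $i$ is covered by a family $L_i$ of lines of slope $1/i$ through its grid, and $\mathcal{B}$ reserves one block per \emph{line}. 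The size bound for $\mathcal{B}$ therefore rests on the count $|L_i|=(i+1)(m^{2/6}-1)+1$ and the summation
\[
\sum_{i=1}^{m^{1/6}} |L_i| \;=\; \mathcal{O}(m^{4/6}),
\]
which, times the block size $m^{1/6}$, gives $\mathcal{O}(m^{5/6})$.

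So as a pure space tally your arithmetic is fine, and you are right that the hard constraint comes from the storage scheme. But the specific sharing pattern you sketched is essentially the $n=2$ scheme of Radhakrishnan \emph{et al.}, and it is precisely because that pattern does \emph{not} accommodate four elements that the paper introduces the line-based geometry. The theorem is asserting $\mathcal{O}(m^{5/6})$ for \emph{that} structure, and its proof is the line-counting lemma plus the sum above---an ingredient your proposal does not anticipate.
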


\subsection{Table $\mathcal{A}$}

Given the universe $\mathcal{U}$ containing $m$ elements, we partition the universe into sets of size $m^{1/6}$. Borrowing the terminology from Radhakrishnan {\em et al.}~\cite{DBLP:conf/esa/RadhakrishnanRR01}, we will refer to these sets as {\em blocks}. It follows that the total number of blocks in our universe is $m^{5/6}$.

The elements within a block are numbered as $1, 2, 3, \dots, m^{1/6}$. We refer to these numbers as the {\em index} of an element within a block. So, an element of $\mathcal{U}$ can be addressed by the number of the block to which it belongs, and its index within that block.

In table $\mathcal{A}$ of our data structure, we will have one bit
for every block in our universe. As there are $m^{5/6}$ blocks, the
size of table $\mathcal{A}$ is $m^{5/6}$.

\subsection{Superblocks}

\label{sec:superblocks}

The blocks in our universe are partitioned into sets of size $m^{4/6}$. Radhakrishnan {\em et al.}~\cite{DBLP:conf/esa/RadhakrishnanRR01} used the term {\em superblocks} to refer to these sets of blocks, and we will do the same in our discussion. As there are $m^{5/6}$ blocks, the number of superblocks thus formed is $m^{1/6}$. These superblocks are numbered as $1, 2, 3, \dots, m^{1/6}$.

For a given superblock, we arrange the $m^{4/6}$ blocks that it contains into a square grid, whose sides are of size $m^{2/6}$. The blocks of the superblock are placed on the integral points of the grid. The grid is placed at the origin of a two-dimensional coordinate space with its sides parallel to the coordinate axes. This gives a unique coordinate to each of the integral points of the grid, and thus to the blocks placed on those points. It follows that if $(x,y)$ is the coordinate of a point on the grid, then $0 \leq x, y < m^{2/6}$.

We can now have a natural way of addressing the blocks of a given superblock -- we will use the $x$-coordinate and the $y$-coordinate of the point on which the block lies. So, a given block can be uniquely identified by the number of the superblock to which it belongs, and the $x$ and $y$ coordinates of the point on which it lies. Henceforth, we will address any block by a three-tuple of the form $(s, x, y)$, where the $s$ is its superblock number, and $(x,y)$ are the coordinates of the point on which it lies.

To address a particular element of the universe, apart from specifying the block to which it belongs, we need to further state its index within that block. So, an element will be addressed by a four-tuple such as $(s, x, y, i)$, where the first three components specify the block to which it belongs, and the fourth component specifies its index.

\subsection{Table $\mathcal{C}$}

Table $\mathcal{C}$ of our data structure has the space to store one block for every possible point of the grid (described in the previous section). So, for the coordinate $(x,y)$ of the grid, table $\mathcal{C}$ has space to store one block; similarly for all other coordinates. As every superblock has one block with coordinate $(x,y)$, all of these blocks share the same location in table $\mathcal{C}$. So, we can imagine table $\mathcal{C}$ as a square grid containing $m^{4/6}$ points, where each point can store one block.

There are a total of $m^{4/6}$ points in the grid, and the size of a block is $m^{1/6}$, so the space required by table $\mathcal{C}$ is $m^{5/6}$.

\subsection{Lines for Superblocks}

\label{sec:lines}

Given a superblock whose number is $i$, we associate a certain number of lines with this superblock each of whose slopes is $1/i$. In the grid arrangement of the superblock (Section~\ref{sec:superblocks}), we draw enough of these lines of slope $1/i$ so that every grid point falls on one of these lines. Figure~\ref{fig:lines} shows the grid and the lines.

		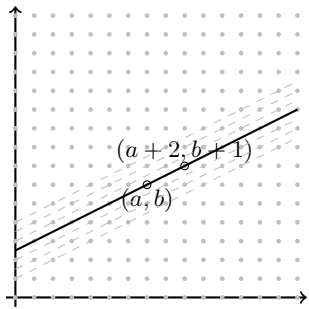
\begin{figure}[t]
			\centering
	
			\begin{tikzpicture}[scale=0.25]
				\draw[thick,->] (-0.5,0) -- (15.5,0);
				\draw[thick,->] (0,-0.5) -- (0,15.5);
				
				\foreach \i in {0,...,15} {
					\foreach \j in {0,...,15} {
						\draw[lightgray,fill=lightgray] (\i,\j) circle (.7ex);
					}
				}
				
				\draw (7,6) circle (1.5ex);
				\node at (7,5.2) {$(a,b)$};
				\draw (9,7) circle (1.5ex);
				\node at (9,7.8) {$(a+2,b+1)$};
				
				\draw[thick,-] (0,2.5) -- (15,10);
				\draw[lightgray,dashed] (0,2) -- (15,9.5);
				\draw[lightgray,dashed] (0,1.5) -- (15,9);
				\draw[lightgray,dashed] (0,1) -- (15,8.5);
				\draw[lightgray,dashed] (0,3) -- (15,10.5);
				\draw[lightgray,dashed] (0,3.5) -- (15,11);
				\draw[lightgray,dashed] (0,4) -- (15,11.5);
			\end{tikzpicture}
	
			\caption{The figure shows the grid for superblock 2, and some of the lines with slope $1/2$. Note that the line passing through $(a, b)$ intersects the $y$-axis at a non-integral point.}
	
			\label{fig:lines}
		\end{figure}

So, all lines of a given superblock has the same slope, and lines from different superblocks have different slopes. As there are $m^{1/6}$ superblocks, and they are numbered $1, 2, \dots, m^{1/6}$, so, we have the slopes of the lines vary as
\begin{equation}
	0 < i \leq m^{1/6}.
	\label{eqn:slope}
\end{equation}

There are two issues to consider -- the number of lines needed to cover every point of the grid, and the purpose of these lines. We address the issue of the count of the lines in this section, and that of the purpose of the lines in the next.

We introduce the notation $l_i(a,b)$ to denote the line that has slope $1/i$, and passes through the point $(a,b)$. We now define the collection of all lines of slope $1/i$ that we are going to draw for the superblock $i$.
\begin{equation}
	L_i = \left\{ \ l_i(a,0) \ \mid \ a \in \mathbb{Z}, \ -i(m^{2/6}-1) \leq a < m^{2/6} \ \right\}.
	\label{eqn:lines}
\end{equation}

In the following three lemmas, we show the properties of this set of lines.

\begin{lemma}
	Every line of $L_i$ contains at least one point of the grid.
	\label{lem:nonempty}
\end{lemma}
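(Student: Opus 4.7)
The plan is to exhibit, for each line in $L_i$, an explicit grid point lying on it. Since the line $l_i(a,0)$ has slope $1/i$ and passes through $(a,0)$, its integral points are exactly those of the form $(a+ik,\, k)$ for integer $k$. So the task reduces to producing an integer $k$ such that both $a+ik$ and $k$ lie in $[0,\, m^{2/6})$.

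I would split into two cases based on the sign of $a$, as allowed by the range in equation~(\ref{eqn:lines}). In the easy case $0 \leq a < m^{2/6}$, the choice $k=0$ works immediately: the point $(a,0)$ is itself on the grid. In the case $-i(m^{2/6}-1) \leq a < 0$, I would set $k = \lceil -a/i \rceil$. The definition of the ceiling yields $-a \leq ik < -a + i$, hence $0 \leq a+ik < i$; then equation~(\ref{eqn:slope}) gives $i \leq m^{1/6} < m^{2/6}$, so the $x$-coordinate is in range. Moreover, from the lower bound $-a \leq i(m^{2/6}-1)$ we get $-a/i \leq m^{2/6}-1$, and since $m^{2/6}-1$ is an integer this forces $k = \lceil -a/i \rceil \leq m^{2/6}-1$, placing the $y$-coordinate in range as well (with $k \geq 1$ since $-a > 0$).

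There is essentially no obstacle here; the lemma is a straightforward case check. The only thing worth highlighting is that the lower endpoint $-i(m^{2/6}-1)$ in the definition of $L_i$ is exactly what is needed to keep $\lceil -a/i\rceil$ within the grid, so the proof effectively explains why the range in~(\ref{eqn:lines}) has been set the way it has. I would end the proof by noting that in either case we have produced a grid point on $l_i(a,0)$, completing the argument.
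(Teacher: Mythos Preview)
Your proposal is correct and follows essentially the same approach as the paper: both split on the sign of $a$, and in the negative case both locate the same grid point (the paper writes $-a = qi + r$ and treats $r=0$ and $r>0$ separately, producing $(0,q)$ and $(i-r,\,q+1)$, which are exactly your $(a+ik,\,k)$ with $k=\lceil -a/i\rceil$). Your use of the ceiling is a slightly cleaner packaging that avoids the explicit subcase on the remainder, but the argument is the same.
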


\begin{proof}
	Consider an arbitrary line $l_i(a, 0)$ of $L_i$. If $0 \leq a < m^{2/6}$, then $(a, 0)$ itself is a member of the grid, and $l_i(a, 0)$ is non-empty.
	
	Let us now consider the scenario where $-i(m^{2/6} - 1) \leq a < 0$. Let $-a = q i + r$, where $0 \leq r < i$.
	
	If $r = 0$, we show that $(0, q)$ is a point that falls on the line through $(a, 0)$, and it also belongs to the grid. First,
	\[
		\frac{q - 0}{0 - a} = \frac{q}{q i + 0} = \frac{1}{i},
	\]
	which shows that the point falls on the required line. Also,
	\[
		\begin{array}{cccccc}
			&-i(m^{1/2} - 1) & \leq & a & < & 0 \\
			\implies & -i(m^{1/2} - 1) & \leq & - q i - 0 & < & 0 \\
			\implies & m^{1/2} - 1 & \geq & q & > & 0,
		\end{array}
	\]
	which shows that $(0, q)$ belongs to the grid. Together they show that $(0, q) \in l_i(a, 0)$.
	
	On the other hand, if $0 < r < i$, the point to consider is $(i - r, q + 1)$. The following equality shows that the point lies on the line through $(a, 0)$ --
	\[
		\frac{q + 1 - 0}{i - r - a} = \frac{q + 1 - 0}{i - r + q i + r} = \frac{1}{i}.
	\]
	To show that the point belongs to the grid, the $x$-coordinate satisfies the following $0 < i - r < m^{1/6}$ (Equation~\ref{eqn:slope}). As for the $y$-coordinate, we have
	\[
		\begin{array}{cccccc}
			&-i(m^{2/6} - 1) & \leq & a & < & 0 \\
			\implies & -i(m^{2/6} - 1) & \leq & - q i - r & < & 0 \\
			\implies & m^{2/6} - 1 & \geq & q + r / i & > & 0 \\
			\implies & m^{2/6} - 1 & \geq & \lceil q + r / i \rceil & > & 0 \\
			\implies & m^{2/6} - 1 & \geq & q + 1 & > & 0
		\end{array}
	\]
	This shows that even when $r$ in non-zero, $l_i(a, 0)$ is non-empty.
\end{proof}

\begin{lemma}
	Every point of the grid belongs to some line of $L_i$.
	\label{lem:complete}
\end{lemma}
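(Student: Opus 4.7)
The plan is to exhibit, for every grid point, an explicit line of $L_i$ that passes through it, and then verify that the parameter $a$ defining that line lies in the admissible range $-i(m^{2/6}-1) \leq a < m^{2/6}$ given in Equation~\ref{eqn:lines}.

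Given an arbitrary grid point $(x,y)$ with $0 \leq x,y < m^{2/6}$, observe that a line of slope $1/i$ passing through $(a,0)$ satisfies the equation $Y = (X-a)/i$, equivalently $a = X - iY$. So I would set $a := x - iy$, which is manifestly an integer, and check that $l_i(a,0)$ contains $(x,y)$ — this is immediate from the construction since $(y-0)/(x-a) = y/(iy) = 1/i$.

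The only substantive step is verifying that this choice of $a$ lies in the range required by the definition of $L_i$. For the upper bound, since $iy \geq 0$ we have $a = x - iy \leq x < m^{2/6}$. For the lower bound, since $y \leq m^{2/6}-1$ we have $iy \leq i(m^{2/6}-1)$, hence $a = x - iy \geq 0 - i(m^{2/6}-1) = -i(m^{2/6}-1)$. Both bounds hold, so $l_i(a,0) \in L_i$ and contains $(x,y)$.

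I do not anticipate any real obstacle here: unlike the previous lemma, where one had to carefully construct a grid point on a given line (with a case split on whether $i$ divides $a$), the direction of this lemma is the easy one — we start from a genuine grid point and simply compute the intercept of the unique slope-$1/i$ line through it. The proof is essentially a two-line range check, and the range in the definition of $L_i$ was evidently chosen precisely to make this check succeed.
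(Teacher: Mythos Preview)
Your proposal is correct and follows essentially the same approach as the paper: both compute the $x$-intercept $a = x - iy$ of the slope-$1/i$ line through $(x,y)$ and verify it lies in the admissible range. In fact you spell out the range check that the paper only gestures at; the one tiny wrinkle is that your slope verification $(y-0)/(x-a) = y/(iy)$ is a $0/0$ form when $y=0$, which is why the paper treats that case separately---but of course when $y=0$ you have $a=x$ and $(x,0)=(a,0)$ lies on $l_i(a,0)$ by definition.
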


\begin{proof}
	Let $(a, b)$ be an arbitrary element of the grid. By construction, $a$ and $b$ are both integers, and $0 \leq a, b < m^{2/6}$. If $b = 0$, then $(a, 0) \in l_i(a, 0)$.
	
	If $b \neq 0$, consider the point $(a - b i, 0)$. As
	\[
		\frac{a - b i - a}{0 - b} = \frac{1}{i},
	\]
	$(a, b)$ falls on the line through $(a - b i, 0)$. And using arguments similar to the one employed in the previous lemma, one can show that $i (m^{2/6} - 1) \leq a - b i < m^{2/6}$. So, $(a, b)$ falls on the line $l_i(a - b i, 0)$.
\end{proof}

\begin{lemma}
	$\mid L_i \mid \ = \ (i + 1)(m^{2/6} - 1) + 1$.
	\label{lem:lines}
\end{lemma}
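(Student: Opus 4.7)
The plan is to reduce the claim to counting integers in an interval. First I would observe that the map $a \mapsto l_i(a,0)$ from the index set of the definition in Equation~\ref{eqn:lines} to $L_i$ is a bijection: it is surjective by construction, and it is injective because a line of nonzero slope $1/i$ meets the $x$-axis in exactly one point, so two distinct integer $x$-intercepts $(a_1,0) \neq (a_2,0)$ give two distinct lines. Hence $|L_i|$ equals the number of integers $a$ satisfying $-i(m^{2/6}-1) \leq a < m^{2/6}$.

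Next I would count. The interval $[-i(m^{2/6}-1),\, m^{2/6})$ contains
\[
    m^{2/6} - \bigl(-i(m^{2/6}-1)\bigr) \;=\; m^{2/6} + i(m^{2/6}-1)
\]
integers. A short algebraic rearrangement gives
\[
    m^{2/6} + i(m^{2/6}-1) \;=\; (m^{2/6}-1) + 1 + i(m^{2/6}-1) \;=\; (i+1)(m^{2/6}-1) + 1,
\]
which is exactly the claimed expression.

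There is no real obstacle here; the lemma is essentially a bookkeeping statement about the definition of $L_i$. The only subtle point worth stating carefully is the injectivity of $a \mapsto l_i(a,0)$, which relies on $1/i \neq 0$, guaranteed by Equation~\ref{eqn:slope}.
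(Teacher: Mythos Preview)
Your proof is correct and follows essentially the same approach as the paper: the paper simply states that the equality is a direct consequence of the definition of $L_i$ in Equation~\ref{eqn:lines}, and you have spelled out that count explicitly. Your added remark on the injectivity of $a \mapsto l_i(a,0)$ is a nice touch that the paper leaves implicit.
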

\begin{proof}
	The equality is a direct consequence of the definition of $L_i$ (Equation~\ref{eqn:lines}).
\end{proof}

\subsection{Table $\mathcal{B}$}

In table $\mathcal{B}$, we have space to store one block for every line of every superblock. That means that for a superblock, say $i$, all of its blocks that fall on the line $l_i(a,b)$ share the same block in table $\mathcal{B}$; and the same is true for all lines of every superblock.

The $i$\textsuperscript{th} superblock contains $\mid L_i \mid = (i + 1)(m^{2/6} - 1) + 1$ lines (Lemma~\ref{lem:lines}), so the total number of lines from all of the superblocks is
\[
	\begin{array}{rl}
		& \mid L_1 \mid + \mid L_2 \mid + \dots + \mid L_{m^{1/6}} \mid \\
		= & \sum\limits_{i = 1}^{m^{1/6}} \Big( (i + 1)(m^{2/6} - 1) + 1 \Big) \\
		= & \left( \frac{(m^{1/6})(m^{1/6} + 1)}{2} + m^{1/6} \right) (m^{2/6} - 1) + m^{1/6} \\
		= & \mathcal{O}(m^{4/6}).
	\end{array}
\]
As mentioned earlier, we reserve space for one block for each of these lines. Combined with the fact that the size of a block is $m^{1/6}$, we have
\[
	|\mathcal{C}| = \mathcal{O}(m^{5/6}).
\]

\subsection{Notations}

As described in Section~\ref{sec:superblocks}, any element of the universe $\mathcal{U}$ can be addressed by a four-tuple, such as $(s, x, y, i)$, where $s$ is the superblock to which it belongs, $(x,y)$ are the coordinates of its block within that superblock, and $i$ is its index within the block.

Table $\mathcal{A}$ has one bit for each block, so all elements of a block will query the same location. As the block number of the element $(s,x,y,i)$ is $(s,x,y)$, so the bit corresponding to the element is $\mathcal{A}(s, x, y)$; or in other words, the element $(s,x,y,i)$ will query the location $\mathcal{A}(s,x,y)$ in table $\mathcal{A}$.

In table $\mathcal{C}$, there is space for one block for every possible coordinates of the grid. The coordinates of the element $(s,x,y,i)$ is $(x,y)$, and $\mathcal{C}$ has space to store an entire block for this coordinate. So, there is one bit for every element of a block, or, in other words, every index of a block. So, the bit corresponding to the element $(s,x,y,i)$ is $\mathcal{C}(x, y, i)$.

Table $\mathcal{B}$ has a block reserved for every line of every superblock. The element $(s, x, y, i)$ belongs to the line $l_s(x,y)$, and thus table $\mathcal{B}$ has space to store one block corresponding to this line. As the index of the element is $i$, so the bit corresponding to the element in table $\mathcal{B}$ is $\mathcal{B}(l_s(x,y), i)$.

\section{Query Scheme}

The query scheme is easy enough to describe once the data structure has been finalised; it follows the decision tree as discussed earlier (Figure~\ref{fig:tree}). Suppose we want to answer the following membership query -- ``Is $(s, x, y, i)$ in $\mathcal{S}$?'' We would make the first query in table $\mathcal{A}$ at location $\mathcal{A}(s,x,y)$. If the bit stored at that location is 0, we query in table $\mathcal{B}$ at $\mathcal{B}(l_s(x,y), i)$, otherwise we query table $\mathcal{C}$ at $\mathcal{C}(x,y,i)$. If the answer from the second query is 1, then we declare the element to be a member of $\mathcal{S}$, else we declare that it is not a member of $\mathcal{S}$.

\section{The Storage Scheme}

The essence of any bitprobe scheme is the storage scheme, i.e. given a subset $\mathcal{S}$ of the universe $\mathcal{U}$, how the bits of the data structure are set such that the query scheme answers membership questions correctly. We start the description of the storage scheme by giving an intuition for its construction.

\subsection{Intuition}
\label{sec:intuition}

The basic unit of storage in the tables $\mathcal{B}$ and $\mathcal{C}$ of our data structure, in some sense, is a block -- table $\mathcal{B}$ can store one block of any line of any superblock, and table $\mathcal{C}$ can store one block of a given coordinate from any superblock. We show next that our storage scheme must ensure that a empty and non-empty block cannot be stored together in a table.

Suppose, the block $(s,x,y)$ of table $\mathcal{A}$ is non-empty, and it contains the member $(s,x,y,i)$ of subset $\mathcal{S}$. If we decide to store this member in table $\mathcal{B}$, then we have to store the block $(s,x,y)$ in table $\mathcal{B}$. So, we have to set in table $\mathcal{A}$ the following -- $\mathcal{A}(s,x,y)=0$. Thus, $(s,x,y,i)$ upon first query will get a 0 and go to table $\mathcal{B}$. In table $\mathcal{B}$, we store the block $(s,x,y)$ at the storage reserved for the line $l_s(x,y)$. Particularly, we have to set $\mathcal{B}(l_s(x,y),i)=1$.

If $(s,x',y')$ is a block that is empty, i.e. it does not contain any member of $\mathcal{S}$, and it falls on the aforementioned line, i.e. $l_s(x',y')=l_s(x,y)$, then we cannot store this block in table $\mathcal{B}$, and hence $\mathcal{A}(s,x',y')$ must be set to 1. If this is not the case, and $\mathcal{A}(s,x',y')=0$, then the first query for the element $(s,x',y',i)$ will get a 0, go to table $\mathcal{B}$ and query the location $\mathcal{B}(l_s(x',y'),i)$ which is same as $\mathcal{B}(l_s(x,y),i)$. We have set this bit to 1, and we would incorrectly deduce that $(s,x',y',i)$ is a member of $\mathcal{S}$.

The same discussion holds true for table $\mathcal{C}$. If we decide to store the block $(s,x,y)$ in table $\mathcal{C}$, we have to set $\mathcal{A}(s,x,y)$ to 1. In table $\mathcal{C}$, we have space reserved for every possible coordinate for a block, and we would store the block at the coordinate $(x,y)$; particularly, we would set $\mathcal{C}(x,y,i)$ to 1. This implies that all empty blocks from other superblocks having the same coordinate cannot be stored in table $\mathcal{C}$, and hence must necessarily be stored in table $\mathcal{B}$. To take an example, if $(s',x,y)$ is empty, then it must stored it table $\mathcal{B}$, and hence $\mathcal{A}(s',x,y)=0$.

To summarise, for any configuration of the members of subset $\mathcal{S}$, as long as we are able to keep the empty and the non-empty blocks separate, our scheme will work correctly. For the reasons discussed above, we note the following.
\begin{enumerate}
	\item We have to keep the non-empty blocks and empty blocks separate.
	\item We have to keep the non-empty blocks separate from each other; and
	\item The empty blocks can be stored together.
\end{enumerate}
Our entire description of the storage scheme would emphasize on how to achieve the aforementioned objective.

\subsection{Description}

Let the four members of subset $\mathcal{S}$ be
\[
	\mathcal{S} = \Big\{ \ (s_1, x_1, y_1, i_1), \ (s_2, x_2, y_2, i_2), \ (s_3, x_3, y_3, i_3), \ (s_4, x_4, y_4, i_4) \ \Big\}.
\]
So, the relevant blocks are
\[
	\Big\{ \ (s_1, x_1, y_1), \ (s_2, x_2, y_2), \ (s_3, x_3, y_3), \ (s_4, x_4, y_4) \ \Big\},
\]
and the relevant lines are
\[
	\Big\{ \ l_{s_1}(x_1,y_1), \ l_{s_2}(x_2,y_2), \ l_{s_3}(x_3,y_3), \ l_{s_4}(x_4,y_4) \ \Big\}.
\]

In the discussion below, we assume that no two members of $\mathcal{S}$ belong to the same block. This implies that there are exactly four non-empty blocks. The scenario where a block contains multiple members of $\mathcal{S}$ is handled in Section~\ref{sec:multiple}.

The lines for the members of $\mathcal{S}$ need not be distinct, say when two elements belong to the same superblock and fall on the same line. We divide the description of our storage scheme into several cases based on the number of distinct lines we have due to the members of $\mathcal{S}$, and for each of those cases, we provide the proof of correctness alongside it.

\begin{wrapfigure}{r}{0.25\textwidth}
    \centering
    \vspace{-10pt}
	\begin{tikzpicture}[scale=0.4]
		\foreach \i in {0,...,7} {
			\foreach \j in {0,...,7} {
				\draw[gray, fill=gray] (\i,\j) circle (.7ex);
			}
		}
		
		\draw[fill] (1,1) circle (1.5ex);
		\draw[thick,-] (0,0) -- (7,7);
		
		\draw[fill] (3,5) circle (1.5ex);
		\draw[thick,-] (3,0) -- (6.5,7);

		\draw[fill] (4,2) circle (1.5ex);
		\draw[thick,-] (0,4) -- (7,6.33);

		\draw[fill] (6,3) circle (1.5ex);
		\draw[thick,-] (5,0) -- (7,7);
	\end{tikzpicture}
  \end{wrapfigure}

\vspace{-10pt}
\subsubsection{Case I} Suppose we have four distinct lines for the four members of $\mathcal{S}$. The slopes of some of these lines could be same, or they could all be different. We know that all lines of a given superblock have the same slope, and lines from different superblocks have different slopes (Section~\ref{sec:lines}). We also know that if two of these lines, say $l_{s_1}(x_1,y_1)$ and $l_{s_2}(x_2,y_2)$, have the same slope, then the corresponding members of $\mathcal{S}$ belong to the same superblock, i.e. $s_1 = s_2$. On the other hand, if their slopes are distinct, then they belong to different superblocks, and consequently, $s_1 \neq s_2$.

Table $\mathcal{B}$ has space to store one block for every line in every superblock. As the lines for the four members of $\mathcal{S}$ are distinct, the space reserved for the lines are also distinct. So we can store the four non-empty blocks in table $\mathcal{B}$, and all of the empty blocks in table $\mathcal{C}$.

To achieve the objective, we set $\mathcal{A}(s_j,x_j,y_j) = 0$ for $1 \leq j \leq 4$, and set the bits in table $\mathcal{A}$ for every other block to 1. In table $\mathcal{B}$, we set the bits $\mathcal{B}(l_{s_j}(x_j,y_j),i_j) = 1$, for $1 \leq j \leq 4$, and all the rest of the bits to 0. In table $\mathcal{C}$, all the bits are set to 0.

So, if $e$ is an element that belongs to an empty block, it would, according to the assignment above, get a 1 upon its first query in table $\mathcal{A}$. Its second query will be in table $\mathcal{C}$, and as all the bits of table $\mathcal{C}$ are set to 0, we would conclude that the element $e$ is not a member of $\mathcal{S}$.

Suppose, $(s,x,y,i)$ be an element that belongs to one of the non-empty blocks. Then, its coordinates must correspond to one of the four members of $\mathcal{S}$. Without loss of generality let us assume that $s=s_1, x=x_1$, and $y=y_1$.

It follows that $\mathcal{A}(s,x,y)$, which is same as $\mathcal{A}(s_1,x_1,y_1)$, is 0, and hence the second query for this element will be in table $\mathcal{B}$. The line corresponding to the element is $l_s(x,y)$, which is same as $l_{s_1}(x_1,y_1)$, and hence the second query will be at the location $\mathcal{B}(l_s(x,y),i) = \mathcal{B}(l_{s_1}(x_1,y_1),i)$. As the four lines for the four members of $\mathcal{S}$ are distinct, so $\mathcal{B}(l_{s_1}(x_1,y_1),i)$ will be 1 if and only if $i=i_1$. So, we will get a {\tt Yes} answer for your query if and only if the element $(s,x,y,i)$ is actually the element $(s_1,x_1,y_1,i_1)$, a member of $\mathcal{S}$.

\subsubsection{Case II} Let us consider the case when there is just one line for the four members of $\mathcal{S}$. As all of their lines are identical, and consequently, the slopes of the lines are the same, all the elements must belong to the same superblock. So, we have $s_1 = s_2 = s_3 = s_4$.

\begin{wrapfigure}{r}{0.25\textwidth}
    \centering
	\begin{tikzpicture}[scale=0.4]
		\foreach \i in {0,...,7} {
			\foreach \j in {0,...,7} {
				\draw[gray, fill=gray] (\i,\j) circle (.7ex);
			}
		}
		
		\draw[fill] (1,1) circle (1.5ex);
		\draw[fill] (3,3) circle (1.5ex);
		\draw[fill] (5,5) circle (1.5ex);
		\draw[fill] (7,7) circle (1.5ex);
		\draw[thick,-] (0,0) -- (7,7);
	\end{tikzpicture}
  \end{wrapfigure}

As all the non-empty blocks belong to the same superblock, all of their coordinates must be distinct. Table $\mathcal{C}$ can store one block for each distinct coordinate of the grid, and hence we can store the four non-empty blocks there. All the empty blocks will be stored in table $\mathcal{B}$.

To this end, we set $\mathcal{A}(s_j,x_j,y_j) = 1$ for $1 \leq j \leq 4$, and the rest of the bits of table $\mathcal{A}$, which correspond to the empty blocks, to 0. In table $\mathcal{B}$, all bits are set to 0. In table $\mathcal{C}$, the bits corresponding to the four elements are set to 1, i.e. $\mathcal{C}(x_j,y_j,i_j) = 1$ for $1 \leq j \leq 4$. The rest of the bits of table $\mathcal{C}$ are set to 0.

The proof of correctness follows directly from the assignment, and the reasoning follows along the lines of the previous case. If the element $e$ belongs to an empty block, it will get a 0 from table $\mathcal{A}$ upon its first query, consequently go to table $\mathcal{B}$ for its second query, and get a 0, implying $e$ is not a member of $\mathcal{S}$.

If the element $(s,x,y,i)$ belongs to a non-empty block, then its coordinates must correspond to one of the members of $\mathcal{S}$. Without loss of generality, let $s=s_1, x=x_1$, and $y=y_1$.

The first query of the element will be at the location $\mathcal{A}(s,x,y) = \mathcal{A}(s_1,x_1,y_1)$, and hence it will get a 1 from table $\mathcal{A}$, and go to table $\mathcal{C}$ for its second query. In this table, it will query the location $\mathcal{C}(x,y,i)$, which is same as $\mathcal{C}(x_1,y_1,i)$. As the coordinates of the four members of $\mathcal{S}$ are distinct, $\mathcal{C}(x_1,y_1,i)$ will be 1 if and only if $i=i_1$. So, we get a 1 in the second query if and only if we have $(s,x,y,i) = (s_1,x_1,y_1,i_1)$, a member of $\mathcal{S}$.

\subsubsection{Case III} The next case that we consider is when there are two distinct lines corresponding to the four members of subset $\mathcal{S}$. The members can be distributed in one of two ways -- one line contains three elements and the other line one, or the elements might be divided equally among the two lines. We consider the cases separately below.

\paragraph{Case III(A)} Consider the case when one line contains three elements, and the other line contains one. Without loss of generality, let the first three members of $\mathcal{S}$ belong to one line, and the fourth one to another one. So, we have $l_{s_1}(x_1,y_1) = l_{s_2}(x_2,y_2) = l_{s_3}(x_3,y_3)$, and the line $l_{s_4}(x_4,y_4)$ is different from the others. As lines with same slopes belong to the same superblock, we have $s_1 = s_2 = s_3$. Whether the fourth member belongs to the aforementioned superblock, or to a different superblock depends on whether the slope of $l_{s_4}(x_4,y_4)$ is same as the other line or it is distinct.

As the first three elements belong to the same superblock, all will have coordinates distinct from one another. The coordinates of the fourth element could be distinct, or it could overlap with one of the first three.

The case of the coordinates of the four members of $\mathcal{S}$ being distinct is one we have seen in Case II, where the elements too had distinct coordinates. The assignment for this scenario will be identical to that case, and consequently, the correctness proof follows.

\begin{wrapfigure}{r}{0.25\textwidth}
    \centering
	\begin{tikzpicture}[scale=0.4]
		\foreach \i in {0,...,7} {
			\foreach \j in {0,...,7} {
				\draw[gray, fill=gray] (\i,\j) circle (.7ex);
			}
		}
		
		\draw[thick,-] (0,0) -- (7,7);
		\draw[thick,-] (2,0) -- (5.5,7);
		\draw[fill] (0,0) circle (1.5ex);
		\draw[fill=white] (0,1) circle (4ex) node {1};
		\draw[fill] (2,2) circle (1.5ex);
		\draw[fill=white] (2,3) circle (4ex) node {2};
		\draw[fill] (4,4) circle (1.5ex);
		\draw[fill=white] (4,5) circle (4ex) node {3};
		\draw[fill=white] (4,3) circle (4ex) node {4};
	\end{tikzpicture}
\end{wrapfigure}

Let us say that the coordinates of the fourth element coincides with one of the other three members. Without loss of generality, let us assume that the third and the fourth elements have identical coordinates, that is to say $x_3=x_4$ and $y_3=y_4$. As two blocks of a superblock cannot have the same coordinates, we must have $s_3 \neq s_4$. Moreover, different superblocks have different slopes for its lines, implying $l_{s_1}(x_1,y_1) = l_{s_2}(x_2,y_2) = l_{s_3}(x_3,y_3) \neq l_{s_4}(x_4,y_4)$.

The assignment in this case will be as follows -- we will store the blocks corresponding to the first two elements in table $\mathcal{C}$, and the blocks corresponding to the last two elements in table $\mathcal{B}$. The empty blocks accordingly will have to be distributed among the two tables.

Accordingly, we set $\mathcal{A}(s_1,x_1,y_1)$ and $\mathcal{A}(s_2,x_2,y_2)$ to 1, and set $\mathcal{A}(s_3,x_3,y_3)$ and $\mathcal{A}(s_4,x_4,y_4)$ to 0. The bits corresponding to the remaining blocks in the two lines, which are $l_{s_1}(x_1,y_1)$ and $l_{s_4}(x_4,y_4)$, are set to 1. The bits of the blocks of all the other lines in all of the superblocks are set to 0.

In table $\mathcal{B}$, the bits corresponding to the third and the fourth element is set to 1, i.e. $\mathcal{B}(l_{s_3}(x_3,y_3),i_3) = \mathcal{B}(l_{s_4}(x_4,y_4),i_4) = 1$, and all the remaining bits are set to 0. In table $\mathcal{C}$, only the bits corresponding to the first two elements are set to 1, i.e. $\mathcal{C}(x_1,y_1,i_1) = \mathcal{C}(x_2,y_2,i_2) = 1$; the rest of the bits of this table are set to 0.

We now prove that the assignment above is correct. If an element $e$ belongs to a line other than the lines $l_{s_1}(x_1,y_1)$ and $l_{s_4}(x_4,y_4)$, then the bit for its block has been set to 0. Consequently, it will query table $\mathcal{B}$. Table $\mathcal{B}$ has separate space for each line, and only certain bits of the non-empty lines have been set to 1. As $e$ falls on a line different from $l_{s_1}(x_1,y_1)$ and $l_{s_4}(x_4,y_4)$, so the second query for $e$ will also return a 0.

Suppose $e$ belongs to an empty block falling on one the lines $l_{s_1}(x_1,y_1)$ and $l_{s_4}(x_4,y_4)$. According to our assignment, the bits of the empty blocks from the lines are set to 1, and hence the second query for $e$ will go to table $\mathcal{C}$. All blocks falling on a line have distinct coordinates, so the coordinates of the block of $e$ will be distinct from the coordinates of the non-empty blocks of the two lines. As table $\mathcal{C}$ has space to store one block for each distinct coordinate, the space for the empty blocks of the two lines will be different from the non-empty ones. As we have set certain bits of the only the non-empty blocks of table $\mathcal{C}$ to 1, all the bits of the block of $e$ must be 0, and hence the answer to second query for $e$ will be 0.

It remains to verify whether the queries corresponding to the elements of the four non-empty blocks give correct answers. We have argued above that the empty blocks are stored in locations distinct from the non-empty blocks. The assignment tells us that we have stored the non-empty blocks in its entirety. These two facts together imply that queries for elements in the non-empty blocks will also give correct answers.

\paragraph{Case III(B)} We now consider the case when the four members of $\mathcal{S}$ are divided equally among the two lines. Without loss of generality, let us assume that the first two members belong to one line, and the other two members belong to the other line. So, we have $l_{s_1}(x_1,y_1) = l_{s_2}(x_2,y_2)$ and $l_{s_3}(x_3,y_3) = l_{s_4}(x_4,y_4)$. Consequently, we have $s_1 = s_2$ and $s_3 = s_4$.

In this scenario, we may have the four non-empty blocks occupying four distinct coordinates of the grid. This situation is familiar to us, and we will handle it as we have done in Case II.

\begin{wrapfigure}{r}{0.25\textwidth}
    \centering
    \vspace{-20pt}
	\begin{tikzpicture}[scale=0.4]
		\foreach \i in {0,...,7} {
			\foreach \j in {0,...,7} {
				\draw[gray, fill=gray] (\i,\j) circle (.7ex);
			}
		}
		
		\draw[thick,-] (0,0) -- (7,7);
		\draw[thick,-] (2,0) -- (5.5,7);
		\draw[fill] (0,0) circle (1.5ex);
		\draw[fill=white] (0,1) circle (4ex) node {1};
		\draw[fill] (2,0) circle (1.5ex);
		\draw[fill=white] (3,0) circle (4ex) node {3};
		\draw[fill] (4,4) circle (1.5ex);
		\draw[fill=white] (4,5) circle (4ex) node {2};
		\draw[fill=white] (4,3) circle (4ex) node {4};
	\end{tikzpicture}
  \end{wrapfigure}

The other scenario is when coordinates of non-empty blocks overlap. As the lines are distinct, they can have an intersection point if and only if they have different slopes. It implies that the lines belong to different superblocks, and hence $s_1 = s_2 \neq s_3 = s_4$. Further, as there is only one common point between the two lines, only one pair of non-empty blocks from the two lines can overlap, i.e. have the same coordinates. Without loss of generality, let it be the second and fourth member of $\mathcal{S}$. So, we have $x_2 = x_4$ and $y_2 = y_4$.

For all blocks which do not fall on any of the two aforementioned lines, and hence implying that they are empty, we set their bits in table $\mathcal{A}$ to 0. So, the second query for the elements of these blocks will be in table $\mathcal{B}$. As we already know, table $\mathcal{B}$ has seperate space reserved for all lines, and we set all the bits of all of those empty lines to 0.

An important thing to note so far is we have not stored anything in table $\mathcal{C}$ yet. We now look into the assignment of the blocks that fall on the two non-empty lines. The blocks that fall on a line have distinct coordinates, so the blocks on the line $l_{s_1}(x_1,y_1)$ have distinct spaces in table $\mathcal{C}$, and we store all these blocks in table $\mathcal{C}$. We accordingly set the corresponding bits in table $\mathcal{A}$ and $\mathcal{C}$.

We now look into the assignment of the blocks on the other line, namely $l_{s_3}(x_3,y_3)$. There is only one block on this line whose coordinate is same as a point on the other line -- the block corresponding to the fourth member of $\mathcal{S}$ has the same coordinate as the second member of $\mathcal{S}$. Then, we cannot store the block $(s_4,x_4,y_4)$ in table $\mathcal{C}$ as it is already occupied by the block $(s_2,x_2,y_2)$ from the other line. We store this block in table $\mathcal{B}$ at the space reserved for the line $l_{s_3}(x_3,y_3)$. All other blocks of this line can then be stored in table $\mathcal{C}$ without any conflict.

The assignment tells us how the empty and the non-empty blocks have been kept separate. An explicit proof of correctness follows along the lines of the previous cases.

\vspace{-10pt}
\subsubsection{Case IV} The final case to consider is when the number of distinct lines due to the non-empty blocks is three. Without loss of generality, let us assume that the blocks corresponding to the third and fourth elements fall on the same line, i.e. $l_{s_3}(x_3,y_3) = l_{s_4}(x_4,y_4)$. This also means that these two blocks belong to the same superblock, and hence, $s_3 = s_4$. It further implies that the coordinates of the two blocks are distinct.

As seen in the previous cases, those lines of the superblocks which do not contain any non-empty block is easy to handle -- we simply store them in table $\mathcal{B}$ at the space reserved for the respective lines. A point to note is that it also leaves table $\mathcal{C}$ untouched. In the discussion below, we will then concentrate on how we handle the blocks from the three lines which are non-empty.
 
The discussion will be divided into three parts based on how many of those points coincide. As the blocks corresponding to the third and the fourth members have distinct coordinates, it follows that at most three of the non-empty blocks can coincide.

\begin{wrapfigure}{r}{0.25\textwidth}
    \centering
    \vspace{-30pt}
	\begin{tikzpicture}[scale=0.4]
		\foreach \i in {0,...,7} {
			\foreach \j in {0,...,7} {
				\draw[gray, fill=gray] (\i,\j) circle (.7ex);
			}
		}
		
		\draw[thick,-] (0,1) -- (7,4.5);
		\draw[thick,-] (0,0) -- (7,7);
		\draw[thick,-] (1,0) -- (4.5,7);
		\draw[fill] (6,6) circle (1.5ex);
		\draw[fill=white] (6,7) circle (4ex) node {4};
		\draw[fill] (2,2) circle (1.5ex);
		\draw[fill=white] (1,2.5) circle (4ex) node {2};
		\draw[fill=white] (2,1) circle (4ex) node {3};
		\draw[fill=white] (3.5,2) circle (4ex) node {1};
	\end{tikzpicture}
  \end{wrapfigure}

\paragraph{Case IV(A)} Let us consider the scenario when three of the non-empty blocks coincide. Without loss of generality, let it be the first three blocks, i.e. $x_1 = x_2 = x_3$ and $y_1 = y_2 = y_3$.

We store all the blocks on the line $l_{s_3}(x_3,y_3)$ in table $\mathcal{C}$. There is only one point in each of the other two lines, namely $l_{s_1}(x_1,y_1)$ and $l_{s_2}(x_2,y_2)$, that is common with this line -- we store the blocks corresponding to those points in table $\mathcal{B}$, and the rest of the blocks of the other lines in table $\mathcal{C}$. So, the blocks $(s_1,x_1,y_1)$ and $(s_2,x_2,y_2)$ are stored in the location reserved in table $\mathcal{B}$ for the lines $l_{s_1}(x_1,y_1)$ and $l_{s_2}(x_2,y_2)$, and the rest of the blocks of these lines are stored in table $\mathcal{C}$.

This assignment keeps the empty blocks and the non-empty blocks separate from each other, and the correctness follows.

  \begin{wrapfigure}{r}{0.25\linewidth}
    \centering
    \vspace{-30pt}
	\begin{tikzpicture}[scale=0.4]
		\foreach \i in {0,...,7} {
			\foreach \j in {0,...,7} {
				\draw[gray, fill=gray] (\i,\j) circle (.7ex);
			}
		}
		
		\draw[thick,-] (0,1) -- (7,1);
		\draw[thick,-] (0,0) -- (7,7);
		\draw[thick,-] (7,1) -- (4,7);
		\draw[fill] (1,1) circle (1.5ex);
		\draw[fill] (7,1) circle (1.5ex);
		\draw[fill=white] (0,1) circle (4ex) node {1};
		\draw[fill=white] (1,0) circle (4ex) node {3};
		\draw[fill=white] (5,2) circle (4ex) node {2};
		\draw[fill=white] (7,0) circle (4ex) node {4};
	\end{tikzpicture}
  \end{wrapfigure}

\paragraph{Case IV(B)} Let us consider the case where two pairs of non-empty blocks coincide. Without loss of generality, let the first block coincide with the third and the second block coincide with the fourth.

The assignment that we devised for the previous case works in this scenario as well -- we store the blocks of the line $l_{s_3}(x_3,y_3)$ in table $\mathcal{C}$, and the blocks $(s_1,x_1,y_1)$ of line $l_{s_1}(x_1,y_1)$ and  $(s_2,x_2,y_2)$ of line $l_{s_2}(x_2,y_2)$ in table $\mathcal{B}$. The other blocks of the lines $l_{s_1}(x_1,y_1)$ and $l_{s_2}(x_2,y_2)$ are stored in table $\mathcal{C}$.

The correctness proof of the previous case holds in this scenario as well.

\paragraph{Case IV(C)} Let us next consider the scenario where only one pair of non-empty blocks coincide. Without loss of generality, let the first block coincide with the third. So, we have $x_1 = x_3$ and $y_1 = y_3$. As only one pair of non-empty blocks coincide, the block of the second element do not lie on any of the other non-empty blocks, and hence has coordinates distinct from the rest.

The assignment in this arrangement will depend on the coordinates of the block of the second block -- it lies on the line $l_{s_3}(x_3,y_3)$, or it doesn't. We address each of these cases below.

  \begin{wrapfigure}{r}{0.25\linewidth}
    \centering
 	\begin{tikzpicture}[scale=0.4]
		\foreach \i in {0,...,7} {
			\foreach \j in {0,...,7} {
				\draw[gray, fill=gray] (\i,\j) circle (.7ex);
			}
		}
		
		\draw[thick,-] (0,1) -- (7,1);
		\draw[thick,-] (0,0) -- (7,7);
		\draw[thick,-] (4.5,0) -- (1,7);
		\draw[fill] (1,1) circle (1.5ex);
		\draw[fill] (7,1) circle (1.5ex);
		\draw[fill] (4,1) circle (1.5ex);
		\draw[fill=white] (0,1) circle (4ex) node {1};
		\draw[fill=white] (1,0) circle (4ex) node {3};
		\draw[fill=white] (5,2) circle (4ex) node {2};
		\draw[fill=white] (7,0) circle (4ex) node {4};
	\end{tikzpicture}
  \end{wrapfigure}

\paragraph{Case IV(C)(i)} We store all of the blocks on the line $l_{s_3}(x_3,y_3)$ in table $\mathcal{C}$. From the line $l_{s_1}(x_1,y_1)$, only one block lies in the previous line, the block containing the first element. This block will be stored in table $\mathcal{B}$ at the location reserved for the line $l_{s_1}(x_1,y_1)$, and the rest of the blocks can be stored in table $\mathcal{C}$. From the last line, i.e. $l_{s_2}(x_2,y_2)$, only one block lies on this line, the block that contains the second element. This blocks will be stored in table $\mathcal{B}$ at the location for the line $l_{s_2}(x_2,y_2)$, and the rest of the blocks can be stored in table $\mathcal{C}$ without conflict.

  \begin{wrapfigure}{r}{0.25\linewidth}
    \centering
   \vspace{-30pt}
 	\begin{tikzpicture}[scale=0.4]
		\foreach \i in {0,...,7} {
			\foreach \j in {0,...,7} {
				\draw[gray, fill=gray] (\i,\j) circle (.7ex);
			}
		}
		
		\draw[thick,-] (0,1) -- (7,1);
		\draw[thick,-] (0,0) -- (7,7);
		\draw[thick,-] (7,1) -- (4,7);
		\draw[fill] (1,1) circle (1.5ex);
		\draw[fill] (7,1) circle (1.5ex);
		\draw[fill] (5,5) circle (1.5ex);
		\draw[fill=white] (0,1) circle (4ex) node {1};
		\draw[fill=white] (1,0) circle (4ex) node {3};
		\draw[fill=white] (4,5) circle (4ex) node {2};
		\draw[fill=white] (7,0) circle (4ex) node {4};
	\end{tikzpicture}
  \end{wrapfigure}

\paragraph{Case IV(C)(ii)} We next consider the case when the block for the second element does not lie on the line $l_{s_3}(x_3,y_3)$. We, in this case, store the second block, i.e. $(s_2,x_2,y_2)$ in table $\mathcal{C}$, and the rest of the blocks on its line, i.e. $l_{s_2}(x_2,y_2)$, in table $\mathcal{B}$ at its alloted location. We do the same for the block of the first element -- store the non-empty block $(s_1,x_1,y_1)$ in table $\mathcal{C}$, the rest of the blocks on its line $l_{s_1}(x_1,y_1)$ in table $\mathcal{B}$.

The only locations used up in table $\mathcal{C}$ are locations for the first and second block, and the blocks left to be allocated space are those falling on the line $l_{s_3}(x_3,y_3)$. The second block do not lie on this line, and hence would not affect the allocations of the line. The first block coincide with third block falling on this line, so the third block, namely $(s_3,x_3,y_3)$ must necessarily be stored in table $\mathcal{B}$ in the space alloted for the line $l_{s_3}(x_3,y_3)$. The rest of the blocks of the line $l_{s_3}(x_3,y_3)$ can now be stored in table $\mathcal{C}$ without conflict.

\paragraph{Case IV(D)} This is the final configuration to consider when there are three distinct lines due to the non-empty blocks - no block coincide with any other block. This implies that the four non-empty blocks have distinct coordinates, and hence all of them can be stored in table $\mathcal{C}$. All the empty blocks can then be stored in table $\mathcal{B}$, and we would have avoided all conflict.

\subsection{Blocks with Multiple Members}

\label{sec:multiple}

In the discussion above, we had assumed that each block can contain at most one member of the subset $\mathcal{S}$, and we have shown for every configuration of the members of $\mathcal{S}$, the bits of the data structure can be so arranged that the membership queries are answered correctly.

In general, a single block can contain upto four members of $\mathcal{S}$, and we need to propose a assignment for such a scenario. As has been noted in the previous section, our basic unit of storage is a block and we differentiate between empty and non-empty blocks. At a given location in table $\mathcal{B}$ or $\mathcal{C}$, a block is stored in its entirety, or it isn't stored at all. This implies that the number of members of $\mathcal{S}$ a non-empty block contains is of no consequence, as we always store an entire block. The scheme from the previous section would thus hold true for blocks containing multiple members.

We now summarise the result in the theorem below.

\begin{theorem}
	There is an explicit adaptive scheme that stores subsets of size at most four and answers membership queries using two bitprobes such that
	\[
		s_A(4,m,2) = \mathcal{O}(m^{5/6}).
	\]
	\label{thm:final}
\end{theorem}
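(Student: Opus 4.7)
The plan is to deduce the theorem by combining the space analysis already done with the correctness of the storage and query schemes. The space bound is Theorem~\ref{thm:size}: each of $\mathcal{A}$, $\mathcal{B}$, $\mathcal{C}$ was shown separately to require $\mathcal{O}(m^{5/6})$ bits, so it suffices to verify that for every subset $\mathcal{S} \subseteq \mathcal{U}$ with $|\mathcal{S}| \leq 4$, the storage scheme described in Section~4 sets the bits of the three tables so that the query scheme of Section~3 returns the correct answer on every $x \in \mathcal{U}$.

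First I would isolate the structural invariant that drives all of Section~4: the unit of storage in $\mathcal{B}$ and $\mathcal{C}$ is an entire block of size $m^{1/6}$, and a non-empty block $(s,x,y)$ must be placed in exactly one of $\mathcal{B}$ (at the slot for the line $l_s(x,y)$) or $\mathcal{C}$ (at the slot for the coordinates $(x,y)$). The bit $\mathcal{A}(s,x,y)$ then routes every query for an element of that block to the right table, and empty blocks can be routed freely provided they are sent to a table where their $\mathcal{B}$-line slot, respectively $\mathcal{C}$-coordinate slot, is not occupied by a non-empty block. Correctness therefore reduces to the combinatorial claim, already observed in Section~\ref{sec:intuition}, that the four non-empty blocks can be partitioned between $\mathcal{B}$ and $\mathcal{C}$ so that (i) those placed in $\mathcal{B}$ lie on pairwise distinct lines, (ii) those placed in $\mathcal{C}$ have pairwise distinct coordinates, and (iii) every empty block can be routed to a table where no non-empty block occupies its slot.

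The proof then proceeds by case analysis on the number $k \in \{1,2,3,4\}$ of distinct lines among $l_{s_j}(x_j,y_j)$ for $1 \le j \le 4$. When $k = 4$ (Case I), all four non-empty blocks go to $\mathcal{B}$ on distinct lines and every empty block goes to $\mathcal{C}$. When $k = 1$ (Case II), a single line forces all four members into the same superblock, so their coordinates are automatically distinct, and all four non-empty blocks can be placed in $\mathcal{C}$ while every empty block is sent to $\mathcal{B}$. The substantive work is in $k = 2$ (Case III) and $k = 3$ (Case IV), where non-empty blocks may share coordinates: for each subcase the explicit allocation given in Section~4.2 must be checked against (i)--(iii). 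The key elementary fact that keeps the enumeration finite is that two distinct lines in the superblock grid meet in at most one point, which bounds how many coordinate-coincidences among the four non-empty blocks are simultaneously possible.

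The main obstacle is Case~IV, where three distinct lines are present, one of them carrying two members of $\mathcal{S}$, and up to three of the non-empty blocks can share coordinates. The idea is to always store the doubly-occupied line $l_{s_3}(x_3,y_3)$ in $\mathcal{C}$ in its entirety and then push exactly the conflicting block from each of the other two lines into $\mathcal{B}$; the subcases IV(A)--IV(D) amount to enumerating how the remaining coordinates can coincide. Once all of $k=1,2,3,4$ are verified, I would lift the simplifying assumption that each block contains at most one member of $\mathcal{S}$ by the observation of Section~\ref{sec:multiple}: because blocks are stored whole, the multiplicity of members inside a block is irrelevant to the assignment. Combined with Theorem~\ref{thm:size}, this gives the bound $s_A(4,m,2) = \mathcal{O}(m^{5/6})$ claimed by the theorem.
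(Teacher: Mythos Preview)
Your proposal is essentially correct and follows the paper's own structure: the space bound is Theorem~\ref{thm:size}, correctness is the case analysis of Section~4.2 on the number of distinct lines, and Section~\ref{sec:multiple} lifts the one-member-per-block assumption. One small correction to your summary of Case~IV: the rule ``store the doubly-occupied line $l_{s_3}(x_3,y_3)$ in~$\mathcal{C}$ in its entirety and send the conflicting block from each of the other two lines to~$\mathcal{B}$'' is not what the paper does in subcase~IV(C)(ii) (and would in fact fail there, e.g.\ when $l_{s_1}\cap l_{s_2}$ happens to be the point $(x_2,y_2)$, an empty block of~$l_{s_1}$ would collide in~$\mathcal{C}$ with the non-empty block of the second element). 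In that subcase the paper instead places the two singly-occupied non-empty blocks $(s_1,x_1,y_1)$ and $(s_2,x_2,y_2)$ in~$\mathcal{C}$, sends the remainder of their lines to~$\mathcal{B}$, and then puts $(s_3,x_3,y_3)$ in~$\mathcal{B}$ with the rest of $l_{s_3}(x_3,y_3)$ in~$\mathcal{C}$; IV(D) is also handled differently (all four non-empty blocks to~$\mathcal{C}$). Since you already say each subcase's explicit allocation must be checked individually, this is a minor imprecision in your synopsis rather than a gap in the argument.
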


\vspace{-30pt}
\section{Conclusion}

In this paper, we have proposed an adaptive scheme for storing subsets of size four and anwering membership queries with two bitprobes that improves upon the existing schemes in the literature. The technique used is that of arranging the blocks of a superblock in a two-dimenstional grid, and grouping them along lines. We hope that this technique can be extended to store larger subsets by extending the idea of an arrangement in a two-dimensional grid to arrangements in three and higher dimensional grids.

\bibliographystyle{splncs04}
\bibliography{main}

\begin{thebibliography}{1}
\providecommand{\url}[1]{\texttt{#1}}
\providecommand{\urlprefix}{URL }
\providecommand{\doi}[1]{https://doi.org/#1}

\bibitem{DBLP:conf/soda/AlonF09}
Alon, N., Feige, U.: On the power of two, three and four probes. In:
  Proceedings of the Twentieth Annual {ACM-SIAM} Symposium on Discrete
  Algorithms, {SODA} 2009, New York, NY, USA, January 4-6, 2009. pp. 346--354
  (2009)

\bibitem{DBLP:conf/walcom/BaigK18}
Baig, M.G.A.H., Kesh, D.: Two new schemes in the bitprobe model. In: {WALCOM:}
  Algorithms and Computation - 12th International Conference, {WALCOM} 2018,
  Dhaka, Bangladesh, March 3-5, 2018, Proceedings. pp. 68--79 (2018)

\bibitem{garg15}
Garg, M.: The Bit-probe Complexity of Set Membership. Ph.D. thesis, School of
  Technology and Computer Science, Tata Institute of Fundamental Research, Homi
  Bhabha Road, Navy Nagar, Colaba, Mumbai 400005, India (2016)

\bibitem{DBLP:conf/soda/GargR15}
Garg, M., Radhakrishnan, J.: Set membership with a few bit probes. In:
  Proceedings of the Twenty-Sixth Annual {ACM-SIAM} Symposium on Discrete
  Algorithms, {SODA} 2015, San Diego, CA, USA, January 4-6, 2015. pp. 776--784
  (2015)

\bibitem{kesh32}
Kesh, D.: Space complexity of two adaptive bitprobe schemes storing three
  elements (2018), http://www.iitg.ac.in/deepkesh/paper32.pdf

\bibitem{DBLP:conf/esa/LewensteinMNR14}
Lewenstein, M., Munro, J.I., Nicholson, P.K., Raman, V.: Improved explicit data
  structures in the bitprobe model. In: Algorithms - {ESA} 2014 - 22th Annual
  European Symposium, Wroclaw, Poland, September 8-10, 2014. Proceedings. pp.
  630--641 (2014)

\bibitem{DBLP:conf/esa/RadhakrishnanRR01}
Radhakrishnan, J., Raman, V., Rao, S.S.: Explicit deterministic constructions
  for membership in the bitprobe model. In: Algorithms - {ESA} 2001, 9th Annual
  European Symposium, Aarhus, Denmark, August 28-31, 2001, Proceedings. pp.
  290--299 (2001)

\bibitem{DBLP:conf/esa/RadhakrishnanSS10}
Radhakrishnan, J., Shah, S., Shannigrahi, S.: Data structures for storing small
  sets in the bitprobe model. In: Algorithms - {ESA} 2010, 18th Annual European
  Symposium, Liverpool, UK, September 6-8, 2010. Proceedings, Part {II}. pp.
  159--170 (2010)

\end{thebibliography}

\end{document}